\newtheorem{prop}{Proposition}
\newcommand{\be}{\begin{equation}}
\newcommand{\ee}{\end{equation}}
\newcommand{\bea}{\begin{eqnarray}}
\newcommand{\eea}{\end{eqnarray}}
\title[Probing Dark Energy through Perfect Fluid Thermodynamics]
{Probing Dark Energy through Perfect Fluid Thermodynamics}
\author[G. Lukes-Gerakopoulos, G. Acquaviva  
     	and C. Markakis	]
       {Georgios Lukes-Gerakopoulos\at[]{1,a},
       Giovanni Acquaviva\at[]{2,b} 
        \splitauthors and Charalampos Markakis\at[]{3,4,5,c} 
       \\ 
        \ins{1}Astronomical Institute of the Academy of Sciences of the Czech Republic,\splitins[1]
        Bo\v{c}n\'{i} II 1401/1a, CZ-141 31 Prague, Czech Republic\\
        \ins{2} Institute of Theoretical Physics, Faculty of Mathematics and Physics,
        \splitins[2] Charles University, CZ-180 00 Prague, Czech Republic\\
        \ins{3} DAMTP, University of Cambridge, Wilberforce Rd, Cambridge CB3 0WA, UK \\
        \ins{4} NCSA, University of Illinois at Urbana-Champaign, 
        \splitins[4] 1205 West Clark St, MC-257, Urbana, IL 61801, USA \\
        \ins{5} School of Mathematical Sciences, Queen Mary University of London,
        \splitins[5]  Mile End Road, London E1 4NS, UK 
\\
        \ins{a}\Email{gglukes@gmail.com} 
        \ins{b}\Email{gioacqua@gmail.com}
        \ins{c}\Email{c.markakis@damtp.cam.ac.uk} } 
\begin{document}

\begin{abstract}
    We demonstrate that the thermodynamics of a perfect fluid describing baryonic matter can, in certain limits, lead to an equation of state similar to that of dark energy. We keep the cosmic fluid equation of state quite general by just demanding that the speed of sound is positive and less than the speed of light. In this framework, we discuss some propositions by looking at the asymptotic behaviour of the cosmic fluid.
\end{abstract}

\begin{keywords}
dark energy--perfect fluid thermodynamics
\end{keywords}

\section{Introduction}

In this work we attempt to tackle the issue of dark energy \citep[see, e.g.,][]{Peebles:2003:RMP:}  by considering just usual baryonic matter in an ever-expanding Universe. We try to keep the investigation's assumptions as general as possible. Thus, we do not specify the equation of state (EOS) and we avoid to limit the study to a specific spacetime. In this framework the baryonic matter is described by an irrotational relativistic perfect fluid. For our analysis we follow a perfect fluid formalism introduced by \citet{Lichnerowicz:1967:RHM:} and \citet{Carter:1979:AGN:}, which in recent works was employed mainly for neutron stars \citep[see, e.g.,][]{Gourgoulhon:2006:EAS:,Markakis:2016:PRD:}. 

In particular, we consider a perfect fluid in an equilibrium configuration with proper energy
density $\epsilon$.  The state of the fluid depends on two parameters, which can
be taken to be the rest-mass density $\rho$ and specific entropy (entropy per unit rest-mass) $s$.
Then the EOS of the fluid is given by a function 
\begin{align}  \label{eq:EOSi}
 \epsilon  = \epsilon (\rho,s )\, .
\end{align}
From Eq.~\eqref{eq:EOSi} one can derive the first law of thermodynamics:
\begin{align}\label{eq:1stlaw}
    d\epsilon= \mu \frac{d \rho}{m_b} + T d(s \rho)\, ,
\end{align}
where $m_{\rm{b}}$ denotes the rest mass of a baryon and $\mu$ is the baryon chemical potential.
The pressure $p$ and specific enthalpy
$h$ are functions of $\rho$ and $s$  entirely determined by Eq.~\eqref{eq:EOSi}:
\begin{align} 
 p &=  - \epsilon  + \rho\, T\, s  + \frac{\mu}{m_{\rm{b}}}\rho \label{eq:pEOS}\, , \\
 h &:= \frac{{\epsilon  + p}}{\rho } = \frac{\mu }{{{m_{\rm{b}}}}} + Ts\, \label{eq:enth}\, .
\end{align}
Note that Eq.~\eqref{eq:pEOS} can be obtained by the extensivity property of the energy density, while the second equality of Eq.~\eqref{eq:enth} comes from Eq.~\eqref{eq:pEOS}.  Now Eqs.~\eqref{eq:1stlaw} and \eqref{eq:enth} yield the thermodynamic relations 
\begin{align}
d\epsilon  &= h\, d\rho  + \rho\, T\, ds \label{eq:de}\, , \\
dp &= \rho\, dh - \rho\, T\, ds   \label{eq:dp}\, .
\end{align}
Moreover, writing $h = h(\rho ,s)$   and differentiating yields
\be  \label{eq:dh}
dh = \frac{{hc_{\rm{s}}^2}}{\rho }d\rho  + {\left. {\frac{{\partial h}}{{\partial s}}} \right|_{\rho}}ds\, ,
\ee
where 
\be   \label{eq:cs2}
c_{\rm{s}}^2 = {\left. {\frac{{\partial p}}{{\partial \epsilon }}} \right|_s} = \frac{\rho }{h}{\left. {\frac{{\partial h}}{{\partial \rho }}} \right|_s}
\ee
is the sound speed.  In order to ensure causal evolution, given the upper bound
for signal propagation set by the speed of light, physically admissible fluids
should have
\begin{align} \label{eq:SpSoL}
0\lesssim s_m^2\leq c_s^2\leq 1,
\end{align}
where $s_m^2$ is an arbitrarily close to zero cut-off value for the speed of sound.

A simple perfect fluid is characterized by the energy-momentum tensor
\begin{align}
 {T_{\alpha}}^{\beta} = h\, \rho\, u_\alpha u^\beta + p\, {g_{\alpha}}^{\beta}
= \left( \epsilon + p \right)\, u_\alpha u^\beta + p\, {g_{\alpha}}^{\beta}\ ,
\end{align}
where $ {g_{\alpha \beta }}$ is the spacetime metric and $u^{\mu}$ is the 
timelike vector tangent to the fluid's flow, satisfying the normalization
condition ${u^\alpha }{u_\alpha } =  - 1$.  Such energy-momentum tensor is the 
source in Einstein's field equations (EFE) ${G_{\alpha}}^{\beta} = {T_{\alpha}}^{\beta}\,$, which are assumed to hold throughout this work.
By taking the covariant divergence of EFE, the doubly contracted Bianchi
identities $\nabla_{\beta}{G_{\alpha}}^{\beta}\equiv0$ assure the covariant 
conservation of energy-momentum
\be
\nabla_{\beta} {T_{\alpha}}^{\beta}=0\, ,\label{emcons}
\ee
which is the relativistic version of Euler equation.  Using Eq.~\eqref{eq:dp} with variation evaluated along the flow lines
($d \rightarrow u^{\alpha}\nabla_{\alpha}$) and thanks to the normalization of
the timelike vector $u^{\alpha}$, eq.\eqref{emcons} takes the form
\be \label{ValenciaT}
  \nabla_\alpha T^\alpha_{\,\,\,\,\beta}  =p_\beta\nabla_\alpha (\rho u^\alpha)+
  \rho[ u^{\alpha} \Omega_{\alpha \beta}- T \nabla_\beta  s ]= 0\, ,
\ee
where $ {p_\alpha } = h{u_\alpha }$ is the {\it canonical momentum} of a fluid element, and its exterior derivative $ \Omega_{\alpha \beta}:=\nabla_\alpha {\rm{}} p_\beta-\nabla_\beta {\rm{}} p_\alpha\,$
is the {\it canonical vorticity 2-form}.  If we assume the rest-mass (or baryon) conservation
\be   \label{eq:continuityeqn}
\nabla_\alpha (\rho u^\alpha)=0\, ,
\ee
eq.~\eqref{ValenciaT} yields the relativistic Euler
equation in  the canonical form:
\be  \label{eq:EulerCanonical}
    u^{\alpha} \Omega_{\alpha \beta}= T \nabla_\beta s\, .
\ee
Contraction of eq.~\eqref{eq:EulerCanonical} with the four-velocity
${u^\beta }$ makes the left-hand side vanish identically.\footnote{This is
because the left-hand side, after contraction with $u^{\beta}$, ends up being a
product of the symmetric term $u^{\alpha}u^{\beta}$ with the antisymmetric 2-form
$\Omega_{\alpha \beta}$.}  Hence the specific entropy is constant along the flow lines:
\be \label{eq:adiabatic}
{u^\alpha }{\nabla _\alpha }s = 0\, .
\ee
This reflects the fact that the Euler equation describes {\it adiabatic flows},
{\it i.e.} there are no heat fluxes in the fluid
nor particle production.  The adiabatic character of the fluid as expressed by
Eq.~\eqref{eq:adiabatic} is a consequence of assuming rest-mass conservation
Eq.~\eqref{eq:continuityeqn}.  

\section{Thermodynamical relations for an irrotational fluid} \label{sec:ThermoRel}

The condition for irrotational fluid flow is $\Omega_{\alpha \beta}=0$,
and  implies through Eq.~\eqref{eq:EulerCanonical} that the specific entropy is
constant, {\it i.e.} $\text{d} s=0$. The fundamental relations Eqs.~\eqref{eq:de}-\eqref{eq:dh} reduce to
\begin{align}
 d h &=\frac{h\, c_{\rm{s}}^2 }{\rho}\, d\rho\, \label{eq:drho} ,\\
 d \epsilon &=h\, d\rho\label{eq:depsilon}\, ,\\
 d p &=\rho\, dh\, . \label{eq:dpdh}
\end{align}
Using the limits set by Eq.~\eqref{eq:SpSoL} and making the reasonable assumption 
that the rest-mass density is a positive quantity, since we consider fluid
composed only of baryonic matter, we arrive through Eq.~\eqref{eq:drho} to 
\begin{align} \label{eq:drhoInt}
 \int_{\rho_1}^{~\rho} \frac{s_m^2 d\rho'}{\rho'} \le
 \int_{\rho_1}^{~\rho} \frac{c_s^2 d\rho'}{\rho'} \le \int_{\rho_1}^{~\rho} \frac{d\rho'}{\rho'} \quad
 \Rightarrow \quad \left(\frac{\rho}{\rho_1}\right)^{s_m^2} \le\frac{h}{h_1}\le \frac{\rho}{\rho_1}\, ,
\end{align}
where index ``1'' refers to the integration constants of the specific fluid with 
equation of state described by the speed of sound $c_s^2$, not by the lower
bound and upper bounds of Eq.~\eqref{eq:SpSoL}. Note that we have assumed that
$d\rho>0$. Eq.~\eqref{eq:drhoInt} implies
$ \displaystyle \left(\frac{\rho}{\rho_1}\right)^{s_m^2-1}\le 1$ ,
which gives that $\rho_1\le\rho$, since  $s_m^2<1$, i.e. the integration constant
$\rho_1$ corresponds to the minimum of the allowed values for the rest-mass
density of the fluid. Moreover, inequality~\eqref{eq:drhoInt} implies that
$h/h_1>0$. At this point we do not make any assumption about the sign of the
specific enthalpy.

Because of Eq.~\eqref{eq:depsilon}, Eq.~\eqref{eq:drhoInt} results in 
\begin{align} \label{eq:depsilonInt}
 &\frac{1}{\rho_1^{s_m^2}}\int_{\rho_1}^{~\rho} \rho'^{s_m^2} d\rho' 
 \le \frac{1}{h_1} \int_{\rho_1}^{~\rho} h d\rho' \le  \frac{1}{\rho_1}\int_{\rho_1}^{~\rho} \rho' d\rho'  \nonumber\\
 \Rightarrow &\frac{\rho_1}{1+{s_m^2}}\left[\left(\frac{\rho}{\rho_1}\right)^{s_m^2+1}-1\right] \le
 \frac{\epsilon-\epsilon_1}{h_1} \le \frac{\rho_1}{2}\left[\left(\frac{\rho}{\rho_1}\right)^{2}-1\right]\, ,
\end{align}
where $\displaystyle \int_{\rho_1}^{\rho} h d\rho'=\int_{\epsilon_1}^{\epsilon} d\epsilon'$ was employed.

From Eqs.~\eqref{eq:drho} and \eqref{eq:dpdh} we get
\begin{align} \label{eq:dpdrho}
 dp=c_s^2 h d\rho\, .
\end{align}
Taking into account Eq.~\eqref{eq:dpdrho}, from  Eq.~\eqref{eq:drhoInt} and
Eq.~\eqref{eq:SpSoL} we have 
\begin{align} \label{eq:dpInt}
 &\frac{s_m^2 }{\rho_1^{s_m^2}}\int_{\rho_1}^{~\rho} \rho'^{s_m^2} d\rho' 
 \le \frac{1}{h_1} \int_{\rho_1}^{~\rho} c_s^2  h d\rho' \le  \frac{1}{\rho_1}\int_{\rho_1}^{~\rho} \rho' d\rho'  \nonumber\\
 \Rightarrow &\frac{s_m^2 \rho_1}{1+{s_m^2}}\left[\left(\frac{\rho}{\rho_1}\right)^{s_m^2+1}-1\right] \le
 \frac{p-p_1}{h_1} \le \frac{\rho_1}{2}\left[\left(\frac{\rho}{\rho_1}\right)^{2}-1\right]\, ,
\end{align}
where $\displaystyle \int_{\rho_1}^{\rho} c_s^2 h d\rho'=\int_{p_1}^{p} dp'$ was employed.
Since $\rho \ge \rho_1$, inequality~\eqref{eq:dpInt} gives that
$(p-p_1)/h_1\ge 0$, while inequality~\eqref{eq:depsilonInt} gives that 
$(\epsilon-\epsilon_1)/h_1\ge 0$. For $\rho=\rho_1$,
Eqs.~\eqref{eq:drhoInt},~\eqref{eq:depsilonInt}, \eqref{eq:dpInt} reduce to
$h=h_1$, $\epsilon=\epsilon_1$, $p=p_1$ respectively, which is trivial but
self-consistent.   

\subsubsection{Assuming constant speed of sound}

Assuming $c_{s}^2 $ is independent of specific enthalpy, i.e. constant, then
by following similar steps as for arriving to the inequalities~\eqref{eq:drhoInt},~\eqref{eq:depsilonInt},~\eqref{eq:dpInt},
we get
\begin{align}
 \epsilon - \epsilon_1 &= \frac{1}{1+c_s^2}\, \rho_1\, h_1\, \left[\left( \frac{\rho}{\rho_1} \right)^{1+c_s^2} - 1 \right]\, ,\label{eq:epsilon} \\
 p - p_1 &= \frac{c_s^2}{1+c_s^2}\, \rho_1\, h_1\, \left[\left( \frac{\rho}{\rho_1} \right)^{1+c_s^2} - 1 \right]\label{eq:pressure}\, ,
\end{align}
which leads to
\begin{equation}\label{eq:eos}
 p  = c_s^2 \left(\epsilon - \epsilon_1 \right)\, +p_1\, .
\end{equation}
Note that if one changes the equation of the state of the fluid, i.e. $c_{s}^2 $,
the integration constants denoted with ``1'' change as well.

\section{Asymptotic behaviors}

\subsection{Rest-mass density}

The rest mass conservation~\eqref{eq:continuityeqn} can be rewritten as:
\begin{align} \label{eq:continuityeqn2}
 \dot{\rho}+\rho~\theta=0\, ,
\end{align}
where $\theta=\nabla_{\alpha}u^{\alpha}$ is the expansion scalar of the congruence
$u^{\alpha}$,  $\dot{\ }={u^\alpha }{\nabla _\alpha }$ denotes the derivative with
respect to a relevant time parameter $t$ along the congruence $u^{\alpha}$.
Integrating Eq.~\eqref{eq:continuityeqn2} along the time parameter $t$ leads to
\begin{equation}\label{eq:restmass_exp}
 \rho = \rho_0\ e^{-\int_{t_{0}}^t \theta(t')\, dt'}\, ,
\end{equation}
with initial condition $\rho(t_0)=\rho_0$.

\begin{prop} \label{prop:restmass}
For a perfect fluid moving along an expanding congruence 
with conserved positive rest-mass, the rest-mass density vanishes asymptotically, $\rho\rightarrow 0^{+}$, in the limit $t\rightarrow\infty$.
\end{prop}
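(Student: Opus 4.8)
The plan is to work directly from the closed-form solution \eqref{eq:restmass_exp}, namely $\rho = \rho_0\, e^{-\int_{t_0}^{t}\theta(t')\,dt'}$, and to translate the hypotheses ``expanding congruence'' and ``conserved positive rest-mass'' into statements about the exponent. Since $\rho_0>0$ by assumption and the exponential is always positive, $\rho(t)>0$ for all $t$, which already gives the ``$0^{+}$'' part of the claim: the density can only approach zero from above. It remains to show the integral in the exponent diverges to $+\infty$ as $t\to\infty$.

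First I would make precise what ``expanding congruence'' means: the natural reading is $\theta=\nabla_\alpha u^\alpha>0$ along the flow, i.e. the expansion scalar is strictly positive (this is the hypothesis labelled ``expanding'' and is the cosmologically relevant case of an ever-expanding Universe). Then the integrand $\theta(t')$ is positive, so $F(t):=\int_{t_0}^{t}\theta(t')\,dt'$ is strictly increasing in $t$ and hence has a limit in $(0,+\infty]$ as $t\to\infty$. If that limit is $+\infty$ we are done immediately, since then $\rho=\rho_0 e^{-F(t)}\to 0^{+}$. The substantive step, and the main obstacle, is to rule out the possibility that $F(t)$ converges to a finite value $F_\infty<\infty$ — which would force $\rho\to\rho_0 e^{-F_\infty}>0$ and contradict the proposition.

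To close that gap one needs an additional ingredient beyond ``$\theta>0$ pointwise'': mere positivity of $\theta$ does not prevent $\int^\infty\theta\,dt<\infty$ (e.g.\ $\theta\sim t^{-2}$). I would handle this by invoking the geometric meaning of ``expanding'': the proper spatial volume element $\mathcal V$ of the congruence obeys $\dot{\mathcal V}/\mathcal V=\theta$, so $\mathcal V(t)=\mathcal V(t_0)\,e^{F(t)}$ and $\rho\,\mathcal V=\rho_0\mathcal V(t_0)=\text{const}$, which is just rest-mass conservation again. Thus $\rho\to0^{+}$ is equivalent to $\mathcal V\to\infty$, i.e.\ to the congruence expanding without bound — which is precisely the physical content of ``ever-expanding'' that the proposition is asserting, so I would state that the hypothesis is to be understood in this strong sense (unbounded expansion, $F(t)\to\infty$), and then the conclusion is immediate. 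Alternatively, if one only wants to assume $\theta\ge\theta_{\min}>0$ is bounded below by a positive constant, then $F(t)\ge \theta_{\min}(t-t_0)\to\infty$ and the result follows at once; I would mention this as the clean sufficient condition.

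So the proof structure I would write is: (i) positivity of $\rho$ from \eqref{eq:restmass_exp} together with $\rho_0>0$, giving the ``from above'' qualifier; (ii) monotonicity of $F(t)=\int_{t_0}^t\theta\,dt'$ from $\theta>0$; (iii) identify the expansion hypothesis with $F(t)\to+\infty$ (either as the definition of unbounded expansion via the volume element $\mathcal V\propto e^{F}$, or under the stronger pointwise bound $\theta\ge\theta_{\min}>0$); (iv) conclude $\rho=\rho_0 e^{-F(t)}\to 0^{+}$. The only delicate point worth flagging in the text is step (iii): positivity of the expansion scalar alone is not quite enough, and the proposition is really relying on the expansion being unbounded over the infinite proper time of the congruence, which I would make explicit rather than leave implicit.
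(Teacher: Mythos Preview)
Your proposal is correct and contains the paper's argument as your second alternative: the paper simply asserts that ``since we have an expanding congruence, there exists a $k>0$ such that $\theta\ge k$'' and then bounds $\rho\le\rho_0 e^{-k(t-t_0)}\to 0$, which is exactly your $\theta\ge\theta_{\min}>0$ route. You are in fact more careful than the paper, since you correctly flag that pointwise positivity $\theta>0$ alone does not force $\int^\infty\theta\,dt'=\infty$; the paper treats the uniform lower bound as part of what ``expanding'' means without comment.
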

\begin{proof}
 Since we have an expanding congruence, there exists a $k>0$, such
 that $\theta\ge k$. Eq.~\eqref{eq:restmass_exp} then leads to
\begin{align}
 \rho = \rho_0\ e^{-\int_{t_{0}}^t \theta(t')\, dt'} \le \rho_0\ e^{- \int_{t_{0}}^t k\, dt'}
 = \rho_0\ e^{-k(t-t_0)}\rightarrow 0\quad \text{for}\quad t\rightarrow\infty\, .
\end{align}
 Since $\rho > 0$, one has $\rho\rightarrow 0^{+} $ for $t\rightarrow\infty$,
 i.e. the rest mass density asymptotically vanishes.
\end{proof}

Proposition~\ref{prop:restmass} and the fact that $\rho_1\le\rho$ suggests that
$\rho_1$ must be an infinitesimally small positive quantity, i.e. $\rho_1\equiv 0^{+}$. 
Moreover, Proposition~\ref{prop:restmass} implies that for $t\rightarrow\infty$  
Eqs.~\eqref{eq:epsilon},~\eqref{eq:pressure} derived for a fluid with constant
non-zero speed of sound lead to
\begin{align}
 \epsilon - \epsilon_1 &\simeq -\frac{1}{1+c_s^2}\, \rho_1\, h_1\, , \label{eq:epsilonInf} \\
 p - p_1 &\simeq  -\frac{c_s^2}{1+c_s^2}\, \rho_1\, h_1\,  \label{eq:pressureInf}\, . 
\end{align}

To show an interesting implication of these relations, let us fix the constants of 
integration by considering the vanishing pressure limit, $p_1 =0$. In this limit,
one typically imposes that the specific enthalpy is equal to unity. Then, the relation $\epsilon+p=\rho h$, for  $p=p_1=0$ and $h=h_1=1$,
implies 
\begin{equation}
\epsilon_1 =\rho_1 \label{eq:enthalpyB}\, .
\end{equation}
With these constraints on the constants, we obtain the following expressions for
Eqs.~\eqref{eq:epsilonInf},~\eqref{eq:pressureInf}:
\begin{align}
 p &\simeq -\frac{\epsilon_1\, c_s^2}{1+c_s^2}\, ,\label{eq:pressBCS}\\
 \epsilon &\simeq \frac{\epsilon_1\, c_s^2}{1+c_s^2}\, .\label{eq:energyBCS}
\end{align}
It is immediately evident that Eq.~\eqref{eq:energyBCS} represents
a constant positive contribution to the energy density for any $c_s^2>0$,
if $\epsilon_1=\rho_1>0$. In a cosmological context such term behaves like a
{\it cosmological constant}, since $p=-\epsilon$. This has been already noticed for the case of the stiff fluid $(c_{\textrm{s}}=1)$ by \citet{Christodoulou:1995:ARRMA:}.

Applying proposition~\ref{prop:restmass} on the inequalities~\eqref{eq:depsilonInt},~\eqref{eq:dpInt}
and using the \eqref{eq:enthalpyB} choice for fixing the constants, we arrive at:
\begin{align}
 - \frac{\epsilon_1 s_m^2}{1+s_m^2} &\lesssim p \lesssim -\frac{\epsilon_1}{2}\, ,\label{eq:pressBIS}\\
  \frac{\epsilon_1 s_m^2}{1+s_m^2} &\lesssim \epsilon \lesssim \frac{\epsilon_1}{2}\, .\label{eq:energyBIS}
\end{align}
Eq.~\eqref{eq:energyBIS} still implies a constant positive contribution to the
energy density for $t\rightarrow\infty$, but Eq.~\eqref{eq:pressBIS} is only
possible if $\epsilon_1=0$, since $s_m^2\ll 1$. Thus, we are led to
$\epsilon_1=0$, which means that Eqs.~\eqref{eq:pressBIS},~\eqref{eq:energyBIS}
respectively lead to $p \simeq \epsilon \simeq 0$. Moreover, since the above
inequalities include the constant speed case as a subcase, then $\epsilon_1=0$
for Eqs.~\eqref{eq:pressBCS},~\eqref{eq:energyBCS}, so they do not imply the 
existence of a cosmological constant. On the other hand, this result might be
suggesting that the choice~\eqref{eq:enthalpyB} we have made to fix the constants
is not the proper one. 

In fact if we do not fix the constants, according to Proposition~\ref{prop:restmass}
the inequalities~\eqref{eq:drhoInt},~\eqref{eq:depsilonInt},~\eqref{eq:dpInt}
reduce to
\begin{align}
 \frac{h}{h_1} &\simeq 0 ,\label{eq:enthalpyBIA}\\
 -\frac{\rho_1}{1+{s_m^2}} &\lesssim \frac{\epsilon-\epsilon_1}{h_1} \lesssim -\frac{\rho_1}{2}\, , \label{eq:energyBIA}\\
 - \frac{s_m^2 \rho_1}{1+{s_m^2}} &\lesssim \frac{p-p_1}{h_1} \lesssim -\frac{\rho_1}{2}\, . \label{eq:pressBIA} 
\end{align}
Again because of $s_m^2\ll 1$, Eq.~\eqref{eq:pressBIA} can hold only if
$\rho_1$ is exactly zero. Note that even if $s_m^2$ was equal to zero
$\rho_1$ had to be zero as well. By not allowing the rest mass energy density to
acquire the zero value, we have arrived to a contradiction. If one would allow it,
then it would not be possible to derive the inequalities in Sec.~\ref{sec:ThermoRel}.
To resolve this contradiction, one might claim that the relations derived in
Sec.~\ref{sec:ThermoRel} hold only for finite time intervals, i.e. they do not hold
for $t\rightarrow \infty$. To discuss the asymptotic behaviors, we need propositions
like Proposition~\ref{prop:restmass}.

\subsection{Enthalpy}

Evaluating the thermodynamic relation Eq.~\eqref{eq:dh} along the flow lines, 
and implementing Eq.~\eqref{eq:adiabatic}, yields the relation
\be
{u^\alpha }{\nabla _\alpha }h = \frac{{hc_{\rm{s}}^2}}{\rho }{u^\alpha }{\nabla _\alpha }\rho\, ,
\ee
which can be used to rewrite the rest-mass conservation equation~\eqref{eq:continuityeqn}  as
\begin{align}
0 =& {\nabla _\alpha }(\rho {u^\alpha })\\ =& \frac{\rho }{{hc_{\rm{s}}^2}}({u^\alpha }{\nabla _\alpha }h + hc_{\rm{s}}^2{\nabla _\alpha }{u^\alpha })\, . \label{cont_h}
\end{align}
The continuity equation for the rest-mass density as expressed by Eq.~\eqref{cont_h} is
\be \label{hdot}
 \dot{h} = - c_{\rm{s}}^2\, \theta\, h\, ,
\ee
For generic time-dependent speed of sound and expansion scalar, one then has
\begin{equation}\label{enth_exp}
 h = h_0\ e^{-\int_{t_{0}}^t c_{\rm{s}}^2(t')\, \theta(t')\, dt'}\, ,
\end{equation}
with initial condition $h(t_0)=h_0$.

\subsubsection{Strong Energy Condition}

\begin{prop}\label{prop:enthalpySEC}
Consider a perfect fluid moving along an expanding and isotropic congruence, 
with conserved rest-mass and satisfying the Strong Energy Condition (SEC);
then if the speed of sound is a function of time defined in the interval $(0,1]$,
in the limit $t\rightarrow\infty$ one necessarily has $\epsilon\rightarrow0$ and
$p\rightarrow0$.
\end{prop}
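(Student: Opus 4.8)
The plan is to reduce the two asserted limits to the single statement $\epsilon+p\to 0$, and then to peel this apart using monotonicity together with the energy condition and the isotropy hypothesis. First I would use that for a perfect fluid $\epsilon+p=\rho\,h$ by Eq.~\eqref{eq:enth}. The congruence is expanding and the rest-mass is conserved, so Proposition~\ref{prop:restmass} applies and $\rho\to0^{+}$; moreover, from Eq.~\eqref{enth_exp}, since $c_s^2\in(0,1]$ and $\theta>0$ the integral in the exponent is non-negative, hence $0<h\le h_0$ (taking $h_0>0$, as is physically required). Equivalently, combining Eqs.~\eqref{eq:continuityeqn2} and \eqref{hdot} gives $\rho h=(\rho h)|_{t_0}\exp[-\int_{t_0}^{t}(1+c_s^2)\,\theta\,dt']$, which tends to zero because $(1+c_s^2)\theta\ge\theta$ and the congruence expands. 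Either way, $\epsilon+p=\rho h\to0^{+}$.

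Next I would show that $\epsilon$ and $p$ each converge. Along the flow $\dot\rho=-\rho\,\theta<0$, so by Eqs.~\eqref{eq:depsilon} and \eqref{eq:dpdrho} one has $\dot\epsilon=h\dot\rho<0$ and $\dot p=c_s^2\,h\dot\rho<0$: both quantities are monotonically decreasing. A decreasing function either converges or diverges to $-\infty$; since $\epsilon+p$ converges they cannot both diverge, and were one to diverge to $-\infty$ the other would have to diverge to $+\infty$, contradicting its decrease. Hence $\epsilon\to\epsilon_\infty$ and $p\to p_\infty$ with $\epsilon_\infty+p_\infty=0$.

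It then remains to show $\epsilon_\infty=0$. The SEC for a perfect fluid gives $\epsilon+3p\ge0$, so in the limit $-2\epsilon_\infty=\epsilon_\infty+3p_\infty\ge0$, i.e.\ $\epsilon_\infty\le0$. For the reverse inequality I would use the isotropy of the congruence: since the flow of Sec.~\ref{sec:ThermoRel} is irrotational and the congruence is isotropic, it is shear-free and vorticity-free, so the Raychaudhuri equation reduces to $\dot\theta=-\tfrac13\theta^2-R_{\alpha\beta}u^\alpha u^\beta$ with $R_{\alpha\beta}u^\alpha u^\beta=\tfrac12(\epsilon+3p)\ge0$ by SEC, while the Hamiltonian constraint of the Einstein equations takes the Friedmann form $\tfrac13\theta^2=\epsilon+(\text{spatial-curvature term})$. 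Eternal expansion drives the scale factor to infinity, so the curvature term is diluted away and $\epsilon$ is asymptotically bounded below by $\tfrac13\theta^2\ge0$, giving $\epsilon_\infty\ge0$. Combining the two inequalities yields $\epsilon_\infty=0$, and hence $p_\infty=0$.

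The step I expect to be the main obstacle is exactly this last separation: $\epsilon+p\to0$ is by itself compatible with $\epsilon\to\epsilon_\infty<0$ and $p\to-\epsilon_\infty>0$ while still respecting the SEC, so the energy condition alone is insufficient and the isotropy assumption is genuinely needed; within that step the delicate point is the curvature contribution to the Hamiltonian constraint (for flat or negatively curved slices the constraint itself precludes recollapse, whereas for positively curved slices one must lean on the expanding hypothesis directly). One should also check that the hypotheses are mutually consistent in the asymptotic regime, since the Raychaudhuri inequality $\dot\theta\le-\tfrac13\theta^2$ forces $\theta\to0$ and this must be reconciled with ``expanding''; if one prefers, ``expanding'' can be read as $\theta>0$ and $\rho\to0$ (hence $\epsilon+p\to0$) recovered instead from the requirement that $\int^{\infty}\theta\,dt$ diverge in this setting.
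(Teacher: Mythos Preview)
Your argument reaches the same endpoint but by a somewhat different route than the paper. The paper squeezes the specific enthalpy itself to zero: it uses the isotropic Raychaudhuri equation together with SEC to bound $\theta$ from above by $3\theta_0/(3+\theta_0 t)$, which gives a \emph{lower} bound on $h$ tending to $0$, and it uses the assumption $c_s^2\theta\geq k>0$ to produce the \emph{upper} bound $h\leq h_0 e^{-k(t-t_0)}\to 0$; having $h\to 0$ and (via Proposition~\ref{prop:restmass}) $\rho\to 0$, it concludes $\epsilon+p\to 0$ and then invokes SEC together with a tacitly assumed $\epsilon\geq 0$ (cf.\ the caption of Fig.~\ref{fig:examples}) to force both limits to vanish. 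You instead short-circuit the sandwich: from Eq.~\eqref{enth_exp} you read off $0<h\leq h_0$ directly, so $\epsilon+p=\rho h\to 0$ follows from Proposition~\ref{prop:restmass} alone, without ever calling on the Raychaudhuri inequality. That is more economical. Your monotonicity step and your use of the Hamiltonian constraint to secure $\epsilon_\infty\geq 0$ are additions the paper does not make; your worry that SEC alone does not exclude $\epsilon_\infty<0$, $p_\infty=-\epsilon_\infty>0$ is well placed, and invoking the constraint is a legitimate way to close that gap---at the price of reading ``isotropic congruence'' as effectively FRW, which goes a step beyond the stated hypotheses. Finally, the self-consistency concern you flag in your last paragraph (SEC drives $\theta\to 0$, in tension with the uniform lower bound $\theta\geq k>0$ used both in Proposition~\ref{prop:restmass} and in the paper's own upper-bound step) is real and is present in the paper's proof as well.
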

\begin{proof}
The equation of rest-mass conservation can be rewritten in the form
Eq.~\eqref{hdot}, whose general solution is given by eq.\eqref{enth_exp}.
We would like to evaluate the behavior of $h$ in the limit when
$t\rightarrow\infty$ by obtaining an upper and a lower bound. 

{\it Lower bound.}  First of all $c_s^2(t)\in (0,1]$, so we can write
\be\label{eq:enth2}
 h = h_0\ e^{-\int_{t_{0}}^t c_{\rm{s}}^2(t')\, \theta(t')\, dt'}\ \geq\  h_0\ e^{-\int_{t_{0}}^t \theta(t')\, dt'} \, .
\ee
Secondly, the Raychaudhuri equation for an isotropic timelike congruence $u^{\alpha}$ reads
\be \label{eq:RayIsot}
 \dot{\theta} = - \left(\frac{1}{3}\theta^2 + R_{\alpha\beta}u^{\alpha}u^{\beta}\right)\, .
\ee
Because of the SEC, the last term is positive.  Hence we get the inequality
\be
 \dot{\theta} \leq - \frac{1}{3}\theta^2\, .
\ee
Integration of such inequality gives
\be
 \theta \leq \frac{3\, \theta_0}{3+\theta_0\, t}\, ,
\ee
with $\theta_0=\theta(t_0)$.  Applying such bound to the rightmost term of
Eq.~\eqref{eq:enth2} gives
\begin{align}
 h \ge h_0\ e^{-\int_{t_{0}}^t \theta(t')\, dt'} &\geq h_0\, e^{-\int_{t_0}^t\frac{3\, \theta_0}{3+\theta_0\, t'}\, dt'}\\ 
 &= h_0\, \left( \frac{3+\theta_0 t_0}{3+\theta_0 t} \right)^3 \rightarrow 0\quad \text{for}\quad t\rightarrow\infty\, .
\end{align}
Hence $h\geq0$ for $t\rightarrow\infty$.

{\it Upper bound.}  By assumption, the product $c_s^2(t) \theta(t)$ is strictly
positive: hence there exist a constant $k>0$ such that $c_s^2(t) \theta(t)\geq k>0$
for any finite time.  The function $h$ can then be bounded from above in the
following way:
\begin{align}
 h = h_0\ e^{-\int_{t_{0}}^t c_{\rm{s}}^2(t')\, \theta(t')\, dt'} &\leq h_0\ e^{- \int_{t_{0}}^t k\, dt'}\nonumber\\
 &= h_0\ e^{-k(t-t_0)}\rightarrow 0\quad \text{for}\quad t\rightarrow\infty\, .
\end{align}
Hence $h\leq0$ for $t\rightarrow\infty$.

Putting together the results of both bounds, we find that $h=0$ in the limit
$t\rightarrow\infty$.  At the same time $\rho\rightarrow0$ in the same limit,
because of Proposition~\ref{prop:restmass}.  Thus, one has that
$h\equiv\frac{\epsilon+p}{\rho}\rightarrow0$ implies that $p+\epsilon\rightarrow0$.  

Lastly, the SEC requires $p+\frac{1}{3}\epsilon \geq 0$: the only case in which
the condition $p+\epsilon\rightarrow0$ is consistent with this bound is when both
$\epsilon\rightarrow0$ and $p\rightarrow0$ (left panel of Fig.~\ref{fig:examples}).
\end{proof}

Note that Proposition~\ref{prop:restmass} by itself could not lead to
$p+\epsilon\rightarrow0$, since the asymptotic bounded value of the specific
enthalpy was not guaranteed.

{\it Proposition~\ref{prop:enthalpySEC} is a general statement about the 
impossibility for a ``well defined'' isotropic perfect fluid satisfying the
SEC to have a non-trivial pressure asymptotically.} Hence, in the following 
prepositions we drop SEC and specialize to a spatially flat Friedmann-Robertson-Walker (FRW) spacetime.

\subsubsection{Bounded Rate of Expansion}

\begin{prop}\label{prop:enthalpyBExp}
Consider a perfect fluid moving along an expanding congruence in a flat FRW spacetime, 
with conserved rest-mass and a rate of expansion bounded by $\Xi$; then
if the speed of sound is a function of time defined in the interval $(0,1]$, in
the limit $t\rightarrow\infty$ one has $\epsilon+p\rightarrow0$, without necessarily
$\epsilon\rightarrow0$ and $p\rightarrow0$, and $0 \lesssim \Xi$.
\end{prop}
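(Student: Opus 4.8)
The plan is to reuse the enthalpy evolution Eq.~\eqref{enth_exp} as in the proof of Proposition~\ref{prop:enthalpySEC}, but to replace the ingredient ``Raychaudhuri equation plus SEC'' by ``flat FRW geometry plus bounded expansion''. Dropping the SEC removes the inequality $p+\tfrac13\epsilon\ge0$ that in Proposition~\ref{prop:enthalpySEC} forced $\epsilon\to0$ and $p\to0$ once $\epsilon+p\to0$ was established, so only the weaker conclusion $\epsilon+p\to0$ survives; the flat-FRW assumption keeps the $t\to\infty$ limit clean (the fluid flow being comoving, hence shear- and vorticity-free), and I read ``rate of expansion bounded by $\Xi$'' as $\theta(t)\le\Xi$ for all $t$.

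The core of the argument is short. Combining Eq.~\eqref{enth_exp}, Eq.~\eqref{eq:restmass_exp} and $h=(\epsilon+p)/\rho$ (Eq.~\eqref{eq:enth}) gives
\be
 \epsilon+p=\rho\,h=\rho_0\,h_0\,e^{-\int_{t_0}^{t}\left[1+c_{\rm{s}}^2(t')\right]\theta(t')\,dt'}\, .
\ee
Since $c_{\rm{s}}^2(t)\in(0,1]$ and, by the expanding hypothesis used in Proposition~\ref{prop:restmass}, there is a $k>0$ with $\theta\ge k$, the exponent tends to $-\infty$ at least linearly and $|\epsilon+p|\le|\rho_0 h_0|\,e^{-k(t-t_0)}\to0$; hence $\epsilon+p\to0$. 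The same chain $0<k\le\theta\le\Xi$ yields $\Xi\ge k>0$, i.e.\ $0\lesssim\Xi$. That $\epsilon$ and $p$ need not vanish individually follows because the SEC inequality is gone: $\epsilon+p\to0$ is compatible with $\epsilon\to\epsilon_\infty$ and $p\to-\epsilon_\infty$ for any $\epsilon_\infty\ge0$, and the constant-$c_{\rm{s}}^2$ equation of state is an explicit witness --- by Eqs.~\eqref{eq:epsilonInf} and \eqref{eq:pressureInf} the limit $\rho\to0$ leaves $\epsilon\to\epsilon_1-\rho_1 h_1/(1+c_{\rm{s}}^2)$ and $p\to p_1-c_{\rm{s}}^2\rho_1 h_1/(1+c_{\rm{s}}^2)$, which sum to $\epsilon_1+p_1-\rho_1 h_1=0$ but are individually nonzero for generic constants, an asymptotic cosmological-constant behaviour $p=-\epsilon$.

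One can say more, and I would include it. In flat FRW the Hamiltonian constraint is $\theta^2=3\epsilon$, and contracting Eq.~\eqref{emcons} with $u^\alpha$ gives $\dot\epsilon+(\epsilon+p)\theta=0$, whence $\dot\theta=-\tfrac32(\epsilon+p)$. Since $\epsilon+p$ decays exponentially, $\dot\theta$ is integrable on $[t_0,\infty)$, so $\theta\to\theta_\infty$ and hence $\epsilon\to\epsilon_\infty=\theta_\infty^2/3$, $p\to-\epsilon_\infty$, with $k\le\theta_\infty\le\Xi$; thus in fact $\epsilon_\infty\ge k^2/3>0$ (so $\epsilon,p\to0$ is even excluded), and the bound on $\Xi$ sharpens to $\Xi\ge\sqrt{3\epsilon_\infty}>0$.

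The main obstacle is not a calculation but making the hypotheses carry the argument: one must fix a precise meaning for ``bounded rate of expansion'' and, more delicately, ensure the $t\to\infty$ limit is actually attained --- that $\theta\le\Xi$ rules out a finite-time curvature blow-up (a ``Big Rip''), so that the congruence is future-complete --- and one must justify upgrading the boundedness of $\theta$ and $\epsilon$ to convergence, which is where either the integrability of $\dot\theta=-\tfrac32(\epsilon+p)$ or the fixed sign of $h$ along the flow (from the separable relation Eq.~\eqref{eq:drho}) does the job.
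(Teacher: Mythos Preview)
Your argument is sound, but you and the paper read the hypothesis differently: the paper takes ``rate of expansion bounded by $\Xi$'' to mean $\dot\theta\le\Xi$, not $\theta\le\Xi$. Under the paper's reading, the proof proceeds by a two-sided squeeze on $h$ itself: the upper bound is unchanged from Proposition~\ref{prop:enthalpySEC} (using $c_{\rm s}^2\theta\ge k>0$), while for the lower bound the SEC--Raychaudhuri step is replaced by integrating $\dot\theta\le\Xi$ to $\theta\le\Xi(t-t_0)+\theta_0$ and hence $h\ge h_0\,e^{-(\Xi(t-t_0)^2/2+\theta_0(t-t_0))}\to0$. With $h\to0$ and $\rho\to0$ one gets $\epsilon+p\to0$. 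The paper then feeds the flat-FRW identity $\dot\theta=-\tfrac32(\epsilon+p)$ (the same one you derive) into the assumed bound $\dot\theta\le\Xi$ to obtain $-\tfrac32(\epsilon+p)\le\Xi$; as $\epsilon+p\to0$ this forces $\Xi\gtrsim0$, and since the inequality does not constrain $\epsilon$ and $p$ separately, the limit $p\to-\epsilon$ with $\epsilon\neq0$ remains admissible.

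Your route is in fact cleaner on the main point: by working directly with the product $\epsilon+p=\rho h$ via Eqs.~\eqref{eq:restmass_exp} and~\eqref{enth_exp}, you obtain $\epsilon+p\to0$ from $\theta\ge k>0$ alone, without invoking any bound $\Xi$ --- which exposes that the $\Xi$-hypothesis is not needed for that conclusion and really enters only through the clause $0\lesssim\Xi$. Your supplementary paragraph (integrability of $\dot\theta$, convergence $\theta\to\theta_\infty$, hence $\epsilon_\infty=\theta_\infty^2/3\ge k^2/3>0$) is a genuine strengthening of the paper's ``without necessarily $\epsilon\to0$ and $p\to0$''. The one thing to correct is the meaning of $\Xi$: with the paper's interpretation, ``$0\lesssim\Xi$'' is a nontrivial asymptotic consequence of $\dot\theta\to0$, whereas under yours it collapses to the tautology $\Xi\ge\theta\ge k>0$.
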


\begin{proof}
 The {\it upper bound} stays the same as in Proposition~\ref{prop:enthalpySEC},
 so $h\leq0$ for $t\rightarrow\infty$.
 {\it Lower bound.} A positive, but bounded rate of congruence expansion means that
 $\dot{\theta}\le \Xi$, thus $\theta(t)\le \Xi(t-t_0)+\theta_0$. Then,
 Eq.~\eqref{eq:enth2} gives
 \begin{align}
 h \ge h_0\ e^{-\int_{t_{0}}^t \theta(t')\, dt'} &\geq h_0\, e^{-\int_{t_0}^t \Xi(t-t_0)+\theta_0 dt'}\nonumber\\ 
 &= h_0\, e^{-(\Xi(t-t_0)^2/2+\theta_0 (t-t_0))} \rightarrow 0\quad \text{for}\quad t\rightarrow\infty\, .
\end{align}
Putting together the results of both bounds, we find that $h=0$ in the limit
$t\rightarrow\infty$.  Thus, again one has that $p+\epsilon\rightarrow0$.

However, from the isotropic Raychaudhuri Eq.~\eqref{eq:RayIsot} we have:
\be
 - \left(\frac{1}{3}\theta^2 + R_{\alpha\beta}u^{\alpha}u^{\beta}\right)\le \Xi \Rightarrow
 -\frac{3}{2} (\epsilon+ p)\le \Xi\, ,
\ee
where we used Friedmann equation $\theta^2=3\epsilon$. Thus, in this case the solution $\epsilon\rightarrow0$, $p\rightarrow0$ is not the only allowed to have  
$\epsilon+p\rightarrow0$  (right panel of Fig.~\ref{fig:examples}). Actually, $p\rightarrow -\epsilon$ implies that $0\lesssim \Xi$. 
\end{proof}

Note that proposition~\ref{prop:enthalpyBExp} allows an exponential growth for FRW
$$3 \frac{\dot{a}}{a}= \theta= \Xi (t-t_0)+\theta_0 \Rightarrow a\le a_0 e^{(\Xi (t-t_0)^2/2+\theta_0(t-t_0))/3}$$ even if $\Xi=0$. Thus, to have exponential growth the minimal requirement
is that $\dot{\theta}\le 0$.

\begin{figure} [t]
\begin{center}
\includegraphics[width=0.25\linewidth,height=0.5\linewidth]{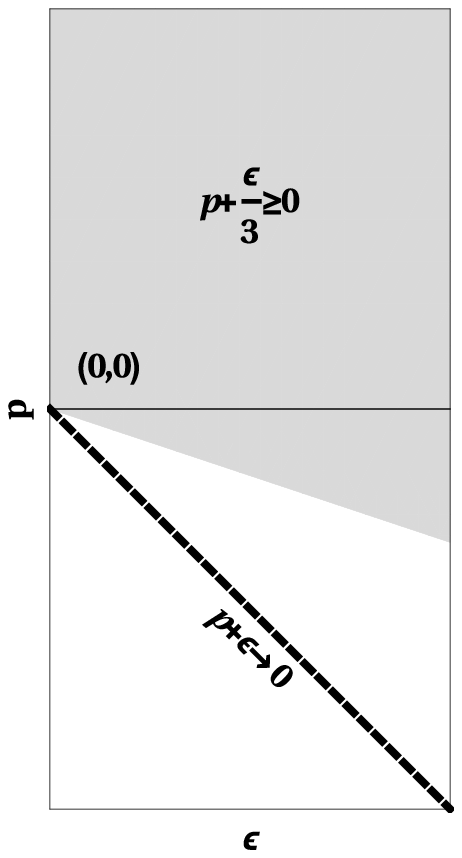}
\includegraphics[width=0.25\linewidth,height=0.5\linewidth]{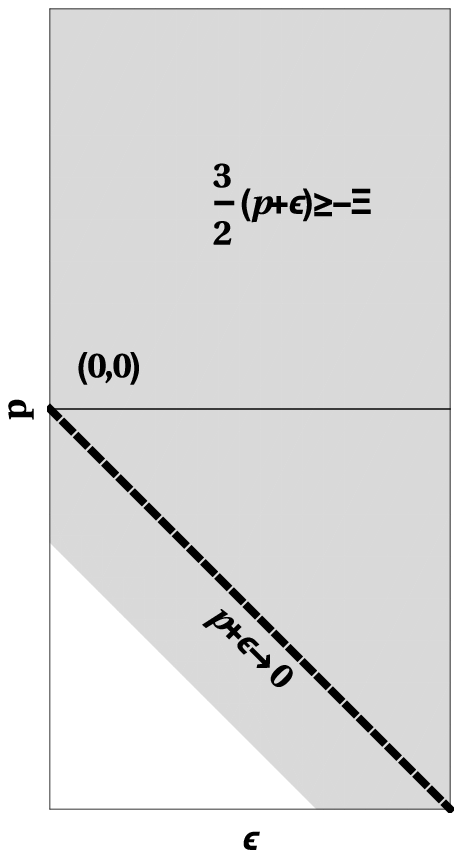}    
\end{center}
\caption{\label{fig:examples}
Left panel: The plane of allowed EoS assuming SEC, Proposition~\ref{prop:enthalpySEC}. Right Panel: The plane of allowed EoS assuming bounded rate of congruence expansion, Proposition~\ref{prop:enthalpyBExp}. In both panels we assume that the energy density is $\epsilon\ge 0$.}
\end{figure}

\section{Summary}

Starting from a general thermodynamical treatment of usual matter, in the form of an irrotational perfect fluid, our investigation indicates that a constant speed of sound for usual matter is not a viable way to provide a cosmological constant.  We have given a formal proof that if the strong energy condition holds, usual matter cannot provide negative pressure.  Moreover, we have provided a formal proof that for a flat FRW spacetime containing only usual matter, for which the strong energy condition is violated, negative pressure is possible .

\ack

G.L-G is supported by Grant No. GA\v{C}R-17-06962Y of the Czech Science
Foundation.  G.A. is supported by Grant No. GA\v{C}R-17-16260Y of the Czech Science Foundation.
C.M. is supported by the European Union's Horizon 2020
research and innovation programme under the Marie Skłodowska-Curie grant agreement No 753115.
\bibliography{RAGtime20_ggl}
\end{document}